\documentclass[letterpaper]{article} 
\usepackage{aaai2026}  
\usepackage{times}  
\usepackage{helvet}  
\usepackage{courier}  
\usepackage[hyphens]{url}  
\usepackage{graphicx} 
\urlstyle{rm} 
\usepackage{natbib}  
\usepackage{caption} 
\frenchspacing  
\setlength{\pdfpagewidth}{8.5in}  
\setlength{\pdfpageheight}{11in}  
%
\usepackage{algorithm}

\usepackage{xcolor}
\usepackage{amssymb}
\usepackage{booktabs}
\usepackage{multirow}
\usepackage{makecell}
\usepackage{pifont}
\usepackage{amsmath}
\usepackage{amsthm}
\usepackage{subcaption}
\usepackage{bm}
\usepackage{microtype}
\newtheorem{theorem}{Theorem}

\usepackage{xr}
\externaldocument{appendix}
\usepackage[noend]{algpseudocode}

%
\usepackage{newfloat}
\usepackage{listings}
\DeclareCaptionStyle{ruled}{labelfont=normalfont,labelsep=colon,strut=off} 
\lstset{%
	basicstyle={\footnotesize\ttfamily},
	numbers=left,numberstyle=\footnotesize,xleftmargin=2em,
	aboveskip=0pt,belowskip=0pt,%
	showstringspaces=false,tabsize=2,breaklines=true}
\floatstyle{ruled}
\newfloat{listing}{tb}{lst}{}
\floatname{listing}{Listing}
%
\pdfinfo{
/TemplateVersion (2026.1)
}

\setcounter{secnumdepth}{0} 

\title{\textls[-1]{WeightFlow: Learning Stochastic Dynamics via Evolving Weight of Neural Network}}
\author {
    Ruikun Li\textsuperscript{\rm 1},
    Jiazhen Liu\textsuperscript{\rm 2},
    Huandong Wang\textsuperscript{\rm 2}\thanks{Corresponding author (wanghuandong@tsinghua.edu.cn).},
    Qingmin Liao\textsuperscript{\rm 1},
    Yong Li\textsuperscript{\rm 2}
}
\affiliations {
    \textsuperscript{\rm 1} Shenzhen International Graduate School, Tsinghua University\\
    \textsuperscript{\rm 2} Department of Electronic Engineering, BNRist, Tsinghua University\\
}


\begin{document}

\maketitle

\begin{abstract}
Modeling stochastic dynamics from discrete observations is a key interdisciplinary challenge. 
Existing methods often fail to estimate the continuous evolution of probability densities from trajectories or face the curse of dimensionality. 
To address these limitations, we presents a novel paradigm: modeling dynamics directly in the weight space of a neural network by projecting the evolving probability distribution.
We first theoretically establish the connection between dynamic optimal transport in measure space and an equivalent energy functional in weight space. 
Subsequently, we design WeightFlow, which constructs the neural network weights into a graph and learns its evolution via a graph controlled differential equation.
Experiments on interdisciplinary datasets show that WeightFlow improves performance by an average of 43.02\% over state-of-the-art methods, providing an effective and scalable solution for modeling high-dimensional stochastic dynamics.
\end{abstract}

\begin{links}
\link{Code}{https://github.com/tsinghua-fib-lab/WeightFlow}
\link{Page}{https://tsinghua-fib-lab.github.io/WeightFlow/}
\end{links}

\section{Introduction}

Stochastic dynamical processes are ubiquitous across numerous scales, from gene regulation and ecological evolution to climate patterns, making them a core subject of interdisciplinary research~\cite{sha2024reconstructing,wagner2023evolvability,lenton2019climate,li2025predicting3}.
A key challenge in modeling such systems lies in solving the temporal evolution of their state distribution, which is deterministically described by the renowned Fokker-Planck equation or the Master Equation~\cite{chen2017beating,jiang2021neural,liu2025beyond}.
Consequently, early efforts focused on developing efficient solution algorithms, including statistical approximations and variational autoregressive networks~\cite{chen2018efficient,anderson2024fisher,tang2023neural,liu2024distilling}. 
These studies, however, are predicated on the availability of well-defined governing equations or reaction rules. In real-world scenarios, we often lack this prior knowledge and only have access to static snapshots of the system at discrete time points~\cite{gao2022unitvelo,neklyudov2023action,qu2025gene}.

\begin{figure}[!ht]
    \centering
    \includegraphics[width=1\linewidth]{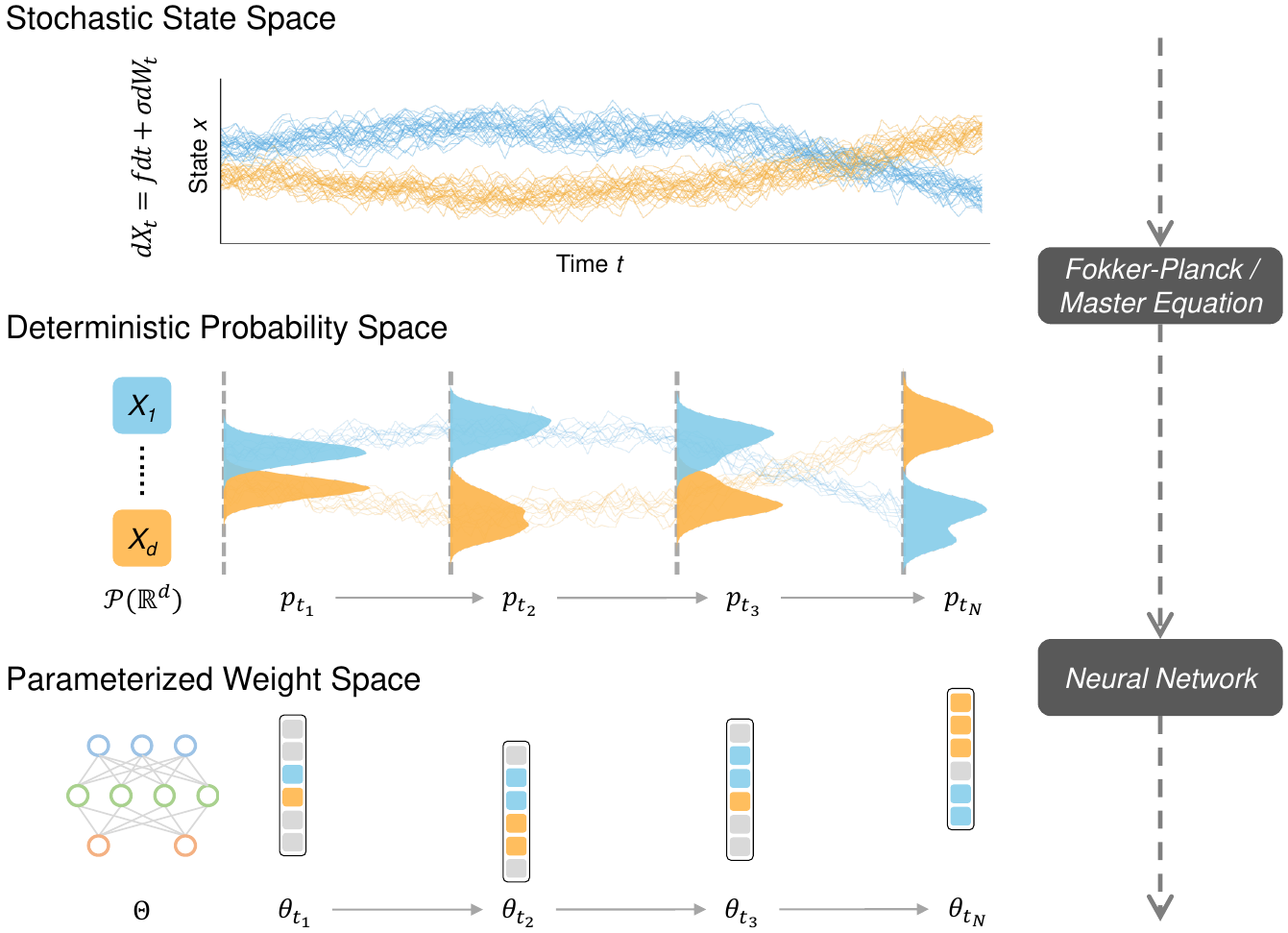}
    \caption{Learning stochastic dynamics: from sample trajectories to evolving weights.}
    \label{fig:intro}
\end{figure}

In the absence of governing equations, existing methods for inferring continuous dynamics from observational data can be classified into two categories. 
The first category models the stochastic trajectories of individual particles in state space.
For instance, NeuralSDE~\cite{li2020scalable} and SPINN~\cite{zhang2025physics} integrate stochastic differential equations to fit the drift and diffusion mechanisms of particle motion. 
Methods based on dynamic optimal transport are inspired by the principle of least action and construct trajectories by minimizing path energy~\cite{tong2020trajectorynet,onken2021ot,huguet2022manifold}. 
More recent studies~\cite{koshizuka2023neural,chen2023deep,kapusniak2024metric,terpin2024learning} have formalized this process as a Schrödinger Bridge problem, ensuring the physical realism and smoothness of trajectories by learning a potential function or introducing momentum.
A common limitation of these trajectory-based approaches is their inability to directly yield the probability density of states. 
Instead, the distribution must be approximated through large-scale trajectory sampling, which makes it exceedingly difficult to estimate rare events or the tails of the distribution.
To address this, a second category of methods emphasizes the evolution of deterministic probability distributions.
For example, transfer operator-based methods~\cite{mardt2018vampnets,schreiner2023implicit,kostic2023learning,federici2024latent} learn the transition probabilities of the system between metastable states. 
NeuralMJP~\cite{seifner2023neural} and ANN-SM~\cite{jiang2021neural} explicitly learn the master equation for discrete systems, thereby directly modeling the temporal evolution of the state distribution. 
Similarly, PESLA~\cite{li2025predicting} and DeepRUOT~\cite{zhang25learning} parameterize and solve the Fokker-Planck equation for continuous state spaces. 
However, these methods require describing a state space that grows exponentially with the system's dimension~\cite{tang2023neural}.
They not only face the curse of dimensionality during optimization but also struggle to capture the underlying geometric manifold~\cite{knobloch1983bifurcations} of the dynamical process.

In light of the limitations of existing methods, we propose a novel approach: projecting the evolution of the state distribution, $p(x,t)$, into the weight space of a neural network that defines it, $\Theta=\{\theta | p(x,t)=p_{\theta_t}(x)\}$, and directly modeling the dynamics of weight evolution (Figure~\ref{fig:intro}).
Compared to probability measures unfolded in high-dimensional spaces, the weights benefit from a well-defined topological structure imposed by the neural network's connectivity, making them amenable to modern graph representation models~\cite{li2025predicting2}.
Furthermore, the number of parameters in neural networks designed for sequential modeling, such as the autoregressive network~\cite{tang2023neural, feng2025singer}, does not depend directly on the system's dimension, thus effectively circumventing the curse of dimensionality.

In this work, we first theoretically demonstrate that the dynamic optimal transport path in probability space can be approximated by solving an equivalent energy functional in the weight space. 
Building on this, we design a novel framework named WeightFlow. 
WeightFlow models the neural network weights as a graph and employs a graph neural differential equation to learn the continuous dynamics of this weight graph. 
To capture the global manifold of dynamics, WeightFlow projects a latent path from observational snapshots and performs a Riemann–Stieltjes integral to achieve continuous interpolation. 
Our contributions are as follows:
\begin{itemize}
    \item We propose a novel framework, WeightFlow, for modeling stochastic dynamics in the weight space of a neural network, offering a new paradigm for tackling the challenge of modeling high-dimensional stochastic systems.
    \item We theoretically derive the approximately equivalent formulation in neural network weight space for the dynamic optimal transport path found in probability space.
    \item WeightFlow models network weights as a weight graph and introduces a graph controlled differential equation to learn its continuous evolution.
    \item WeightFlow achieves an average performance improvement of 43.02\% over existing state-of-the-art methods. Code is available at https://github.com/tsinghua-fib-lab/WeightFlow.
\end{itemize}

\section{Preliminary}

\subsection{Stochastic Dynamics \& Probability Evolution}
We consider a physical system whose state $X_t \in \mathbb{R}^d$ evolves according to a stochastic differential equation:
\begin{equation}
    dX_t = f(X_t, t)dt + \sigma(X_t, t)dW_t
\end{equation}
where $W_t$ is a standard Wiener process, $f(X_t, t)$ is the drift, and $\sigma(X_t, t)$ is the diffusion tensor.
While individual trajectories are stochastic, the ensemble's probability density $p(x,t)$ evolves deterministically according to the Fokker-Planck equation~\cite{risken1989fokker,gardiner2009stochastic}:
\begin{equation}
    \partial_t p(\bm{x},t) = -\nabla_x \cdot [(f(\bm{x},t)p(\bm{x},t))] + \nabla^2_x: [D(\bm{x},t)p(\bm{x},t)],
\end{equation}
where $D(\bm{x},t)=\frac{1}{2}\sigma^2(\bm{x},t)$ is the diffusion matrix and $\nabla^2_x:$ denotes the tensor dot product. The Master equation~\cite{van1992stochastic} similarly describes probability evolution for discrete state spaces. This creates a duality between the microscopic stochastic process and its macroscopic deterministic evolution.

\subsection{Problem Definition}
Our goal is to reconstruct the continuous evolution trajectory $(\mu_t)_{t\in [0,T]}$ in the space of measures $\mathcal{P}(\mathbb{R}^d)$ from a set of empirical distributions $\{\hat{\mu}_{t_i}\}^N_{i=1}$ observed at $N$ discrete times. 
Each snapshot $\hat{\mu}_{t_i}$ consists of $n$ samples from the system's ensemble. We assume an absolutely continuous path, where each measure $\mu_t$ has a probability density function $p(\bm{x},t)$ satisfying $d\mu_t(\bm{x})=p(\bm{x},t)dx$.
This formulation applies generally to continuous PDF evolutions, including those governed by the Fokker-Planck equation or approximated from Master equations.

We frame this as a dynamical optimal transport problem based on the principle of least action~\cite{koshizuka2023neural, kapusniak2024metric}. 
The optimal path between two consecutive snapshots, $\nu_0$ and $\nu_1$, is found by solving the Benamou-Brenier formulation~\cite{benamou2000computational}:
\begin{equation}~\label{equ:DOT}
\begin{aligned}
    W(\nu_0, \nu_1) = \inf_{p, f} \int_0^1 \int_{\mathbb{R}^d} \frac{1}{2} \|f(\bm{x},t)\|_2^2 p(\bm{x},t) dxdt \\
    \text{s.t.} \quad  \partial_t p + \nabla \cdot (p f) = 0, \quad p|_{t=0} = \nu_0, \quad p|_{t=1} = \nu_1 .
\end{aligned}
\end{equation}
This problem seeks an optimal path, defined by density $p(\bm{x},t)$ and velocity $f(\bm{x},t)$, that minimizes kinetic energy.
\section{Weight Path of Probability Evolution}

In this section, we provide a theoretical justification for modeling stochastic dynamics through the evolution of weights.

\subsection{Parameterizing Probability Distribution}

We parameterize the high-dimensional probability density $p(\bm{x}, t)$ with an autoregressive neural network~\cite{wu2019solving}. 
This factorizes the joint density into a product of conditional probabilities via the chain rule:
\begin{equation}~\label{equ:van}
    p(\bm{x}, t) = p_{\bm{\theta}_t}(\bm{x}) = \prod_{i=1}^d p_{\bm{\theta}_t}(x_i | x_1,...,x_{i-1})
\end{equation}
Here, a complex, high-dimensional distribution is represented by a finite-dimensional weight vector $\bm{\theta}_t$. 
A key advantage of this autoregressive formulation is scalability with dimension $d$, which mitigates the curse of dimensionality. 
The universal approximation property~\cite{hornik1989multilayer} ensures that for any time $t$, a weight set $\bm{\theta}_t$ exists that can accurately represent the true density $p(\bm{x}, t)$.

We formalize this relationship as a map $G$, determined by the network architecture, from the weight space $\Theta$ to the space of probability densities:
\begin{equation}
    G: \bm{\theta}_t \in \Theta \mapsto p_{\bm{\theta}_t}(x) \in \mathcal{P}(\mathbb{R}^d)
\end{equation}
Thus, the deterministic evolution of the probability distribution becomes a trajectory in the network's finite-dimensional weight space. This maps the density path $(p_t)_{t\in [0,T]}$ to a weight path $(\bm{\theta}_t)_{t\in [0,T]}$.

\subsection{Bridging the Probability Path and Evolving Weight}

The dynamic optimal transport problem of equation~(\ref{equ:DOT}) can be approximately solved by finding an optimal trajectory in the network's weight space. 
The core idea is to find a weight-space path that approximates the true optimal path's energy by minimizing a corresponding energy functional. 
We first define this energy functional~\cite{zhang25learning} as:
\begin{equation}~\label{equ:RUOT}
    \mathcal{E}(p, v):=\int_{0}^{1}\int_{\mathbb{R}^d}[\frac{1}{2}||v(x,t)^{2}||+\frac{\sigma^{4}}{8}||\nabla \log p||^{2}]pdxdt,
\end{equation}
where $p$ denotes $p(x,t)$. 
Here, $v(x,t)$ is the velocity field, and the density $p(x,t)$ evolves according to the continuity equation $\partial_t p + \nabla \cdot (p v) = 0$.
The evolution of the parameterizing weights $\bm{\theta}_t$ is then governed by an Ordinary Differential Equation (ODE):
\begin{equation}
    \frac{d \bm{\theta}}{dt} = g(\bm{\theta}, t).
\end{equation}
The weight-space velocity field $g$ defines a corresponding state-space velocity $v_g(x,t)$ that satisfies the continuity equation $\partial_t p_{\bm{\theta}} + \nabla \cdot (p_{\bm{\theta}} v_g) = 0$.
\begin{theorem}~\label{theo:main}
    Let $\mu_0$ and $\mu_1$ be the initial and end probability measures on $\mathbb{R}^d$, and let $(p^*, v^*)$ be the optimal path under the Eq.~\ref{equ:RUOT}.
    Assuming the following conditions are satisfied:
    \begin{enumerate}
        \item[(C1)] There exists a reference parameter $\bm{\theta}_0$ such that $p_{g_0} \approx p^*$ and $v_{g_0} \approx v^*$;
        \item[(C2)] The trained minimizer $\bm{\theta}^*$ satisfies $p_{\bm{\theta}^*}(1) \approx \mu_1$;
        \item[(C3)] $\bm{\theta}^* = \arg\min_{\bm{\theta}} (\lambda\mathcal{E}(p_{\bm{\theta}}, v_g) + \mathcal{L}(g)$) with $\lambda > 0$, where
        \begin{equation}
           \mathcal{E}(p_{\bm{\theta}}, v_g):=\int_{0}^{1}\int_{\mathbb{R}^d}[\frac{1}{2}||v_g||^2+\frac{\sigma^{4}}{8}||\nabla \log p_{\bm{\theta}}||^{2}]p_{\bm{\theta}} dxdt
        \end{equation}
        and $\mathcal{L}(g)=-\mathbb{E}_{x\sim \mu_0}[\log p_{\bm{\theta}_{\mu_0}}(x)]-\mathbb{E}_{x\sim \mu_1}[\log p_{\bm{\theta}_{\mu_1}}(x)]$.
    \end{enumerate}
    Then, we have $|\mathcal{E}(p_{\bm{\theta}^*}, v_{g^*}) - \mathcal{E}(p^*, v^*)| \le \delta$.
\end{theorem}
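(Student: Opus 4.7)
The plan is to establish a sandwich argument that bounds the weight-space energy above and below by the true optimal energy, with the gap controlled by the approximation errors implicit in (C1)--(C3).

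First I would derive an upper bound on $\mathcal{E}(p_{\bm{\theta}^*}, v_{g^*})$ by comparing the minimizer against a good competitor. By (C3), for any admissible parameter $\bm{\theta}$ we have $\lambda\mathcal{E}(p_{\bm{\theta}^*}, v_{g^*}) + \mathcal{L}(g^*) \le \lambda\mathcal{E}(p_{\bm{\theta}}, v_g) + \mathcal{L}(g)$. Substituting the reference parameter $\bm{\theta}_0$ from (C1) and invoking continuity of the kinetic-plus-Fisher-information integrand with respect to perturbations of $(p,v)$ in a suitable $C^1$-type norm, I would obtain $\mathcal{E}(p_{g_0}, v_{g_0}) \le \mathcal{E}(p^*, v^*) + \varepsilon_1$, where $\varepsilon_1$ quantifies the ``$\approx$'' in (C1). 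Rearranging then yields
\[
\mathcal{E}(p_{\bm{\theta}^*}, v_{g^*}) \le \mathcal{E}(p^*, v^*) + \varepsilon_1 + \tfrac{1}{\lambda}\bigl(\mathcal{L}(g_0) - \mathcal{L}(g^*)\bigr),
\]
so the upper gap can be made arbitrarily small by taking $\lambda$ large and $\bm{\theta}_0$ close to the true optimum.

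For the matching lower bound I would combine (C2) with the variational definition of $(p^*, v^*)$. The pair $(p_{\bm{\theta}^*}, v_{g^*})$ satisfies the continuity equation (by construction of $v_g$) and connects the initial marginal to $p_{\bm{\theta}^*}(1) \approx \mu_1$, so it is admissible for a slightly perturbed Benamou--Brenier problem. A standard stability result for dynamic optimal transport with regularization under terminal-marginal perturbations then gives $\mathcal{E}(p^*, v^*) \le \mathcal{E}(p_{\bm{\theta}^*}, v_{g^*}) + \varepsilon_2$, where $\varepsilon_2$ is controlled by a Wasserstein-type distance between $p_{\bm{\theta}^*}(1)$ and $\mu_1$ coming from (C2). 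Combining the two inequalities gives $|\mathcal{E}(p_{\bm{\theta}^*}, v_{g^*}) - \mathcal{E}(p^*, v^*)| \le \delta$ with $\delta := \max\{\varepsilon_1 + \lambda^{-1}|\mathcal{L}(g_0) - \mathcal{L}(g^*)|,\ \varepsilon_2\}$.

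The hard part will be making the two qualitative ``$\approx$'' statements in (C1) and (C2) quantitative in the correct function-space topology, because the Fisher-information term $\tfrac{\sigma^{4}}{8}\|\nabla\log p\|^{2}$ in $\mathcal{E}$ is highly sensitive to perturbations of the density: small $L^{\infty}$ errors in $p$ can blow up the score function near regions where $p$ is small. I would handle this by restricting attention to a compact set of admissible $(p,v)$ on which $p$ is uniformly bounded away from zero and $\nabla\log p$ is Lipschitz, and then invoking a local Lipschitz estimate for the functional $\mathcal{E}$ on this set. The endpoint stability argument for $\varepsilon_2$ is more classical and can be imported from standard Benamou--Brenier stability theory.
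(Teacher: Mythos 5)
Your sandwich argument---an upper bound obtained by testing the minimality of $\bm{\theta}^*$ in (C3) against the competitor $\bm{\theta}_0$ supplied by (C1), and a lower bound obtained from the optimality of $(p^*,v^*)$ together with the approximate terminal constraint (C2)---is exactly the structure of the paper's proof, and your treatment of the residual $\lambda^{-1}\bigl(\mathcal{L}(g_0)-\mathcal{L}(g^*)\bigr)$ term and of the continuity of $\mathcal{E}$ (including the caveat about the Fisher-information term's sensitivity where $p$ is small) is at the same level of rigor that the theorem's qualitative ``$\approx$'' hypotheses and unquantified $\delta$ permit. This is correct and takes essentially the same route as the paper.
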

\begin{proof}
See Appendix for proof.
\end{proof}
Here, $\delta > 0$ is an arbitrarily small error term. A detailed proof is in Appendix~\ref{sec_supp:proof}. Theorem~\ref{theo:main} bridges weight evolution and dynamic optimal transport, showing that the probability path can be projected onto a path in network's weight space.

\subsection{Learning Stochastic Dynamics via Evolving Weight}


We formalize learning stochastic dynamics as an optimization problem in the weight space. 
Given $N$ empirical distributions $\{\hat{\mu}_{t_i}\}^N_{i=1}$, an autoregressive network $G$ (Eq.~\ref{equ:van}) models the state distribution $p(\bm{x}, t)$ by mapping a weight vector $\bm{\theta}_t$ to a density $p_{\bm{\theta}_t}(\bm{x})$. 
Learning the dynamics involves finding an optimal velocity field $g(\bm{\theta}, t)$ whose resulting weight trajectory $\{\bm{\theta}_t\}_{t \in [0,T]}$ minimizes the objective function:
\begin{equation}\label{equ:obj_func}
    \frac{1}{N}\sum^N_{i=1}(-\mathbb{E}_{\bm{x} \sim \hat{\mu}_{t_i}}\log p_{\bm{\theta}_{t_i}}(\bm{x})) + \lambda \mathcal{E}(p_{\bm{\theta}}, v_g).
\end{equation}
The objective's first term ensures data fit, while the second constrains the path energy.
\section{WeightFlow}

In this section, we introduce WeightFlow, a novel framework for modeling stochastic dynamics through the evolution of neural network weights, as illustrated in Figure~\ref{fig:framework}.

\begin{figure*}[!ht]
    \centering
    \includegraphics[width=0.85\linewidth]{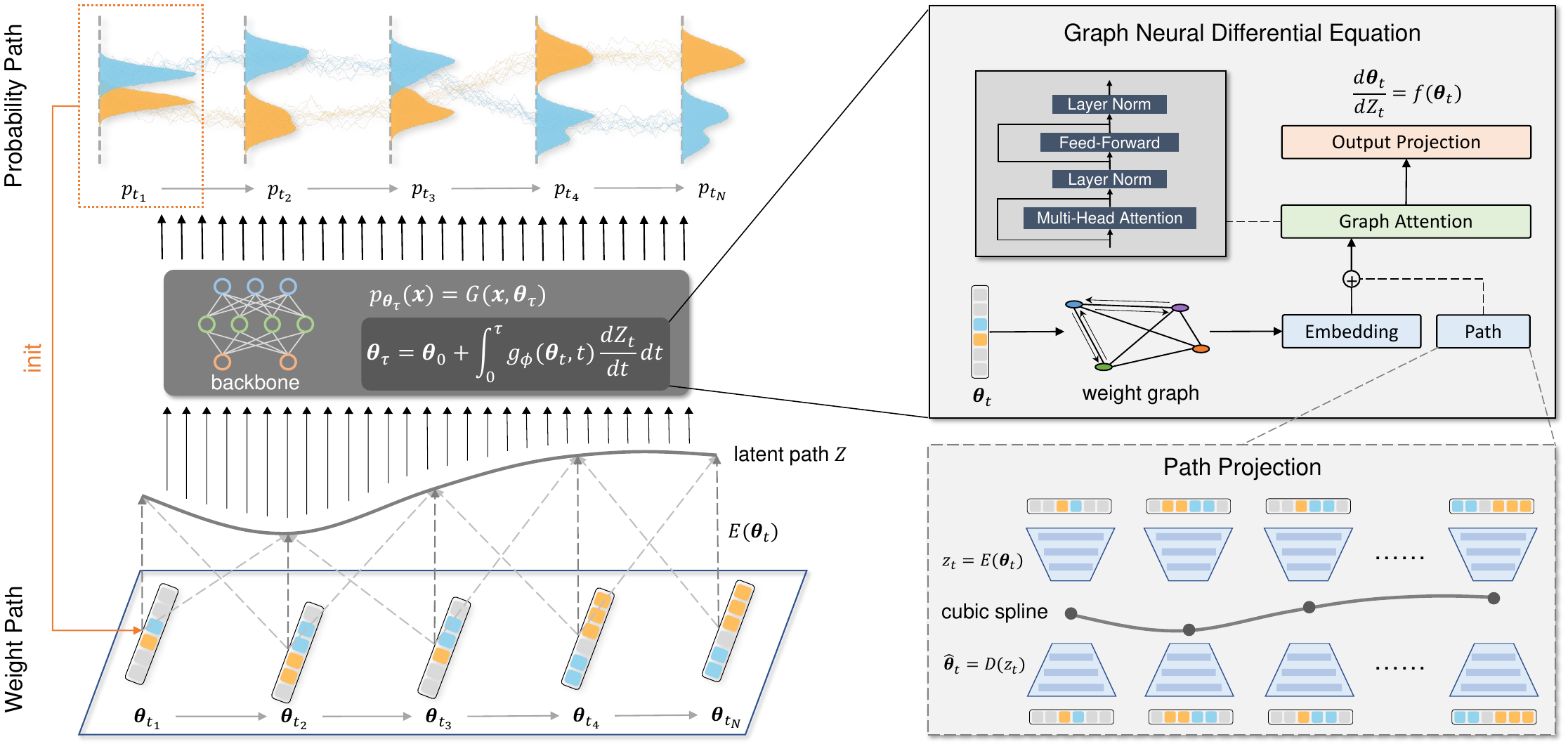}
    \caption{Framework of WeightFlow.}
    \label{fig:framework}
\end{figure*}

\subsection{Backbone and Weight Graph}

WeightFlow uses an autoregressive model, termed the \textbf{backbone}, to parameterize the state distribution $p(\bm{x}, t)$. 
The backbone implements the factorization in Eq.~\ref{equ:van} and can be any sequential architecture like an RNN or Transformer. 
For an RNN-based backbone, it conditions the $i$-th dimension's probability on a hidden state $h_{i-1}$: $p(x_i,t) = p(x_i|h_{i-1}, \bm{\theta}_t)$.
Normalization of $p(\bm{x}, t)$ is ensured by applying a softmax output to each conditional density $p(x_i,t)$.
We train a separate backbone for each time snapshot $t_i$ by minimizing the data's Negative Log-Likelihood (NLL). 
This yields a set of weights $\{\bm{\theta}_{t_i}\}_{i=1}^N$ that serve as anchor points for learning the continuous dynamics.
Details of the Transformer-based backbone are provided in Appendix~\ref{sec_supp:transformer}.

The weights of the backbone are inherently structured by the network's architecture. 
To represent this structure, we organize the weights of each snapshot $\bm{\theta}_{t_i}$ into a \textbf{weight graph} based on the network's forward data flow. 
Specifically, each output neuron of the backbone's linear layers becomes a node in the graph. 
Each node's feature vector is formed by concatenating its incoming connection weights and its bias term. 
Thus, a linear layer with weights $w \in \mathbb{R}^{D_{out} \times D_{in}}$ and bias $b \in \mathbb{R}^{D_{out} \times 1}$ becomes $D_{out}$ nodes in the weight graph, each with a $(D_{in}+1)$-dimensional feature. 
Weight evolution is thereby formalized as the evolution of node features on this graph.

\subsection{Graph Neural Differential Equation}

To learn the optimal weight path from Eq.~\ref{equ:obj_func}, we parameterize the velocity field $g$ as a neural differential equation. 
Specifically, WeightFlow models the evolution of the backbone weights $\bm{\theta}_t$ as a graph neural ODE~\cite{chen2018neural}:
\begin{equation}\label{equ:ode}
    \bm{\theta}_\tau = \bm{\theta}_0 + \int^\tau_0 g_{\phi}(\bm{\theta}_t, t) dt,
\end{equation}
where $g_{\phi}=\frac{d{\bm{\theta_t}}}{dt}$ is a hypernetwork modeling the dynamics of the backbone weights $\bm{\theta}_t$. 
To handle the heterogeneous dimensions of the weight graph's node features, we first use a layer-wise linear map to project them into a common dimension. 
We then apply a multi-head attention mechanism to the fully connected weight graph to model all inter-node relationships. 
The resulting representation is projected back to the original dimensions via another layer-wise map to predict the weight derivative.

To capture the global path's manifold trend, we enhance the ODE into a Controlled Differential Equation (CDE)~\cite{kidger2020neural} via a Riemann–Stieltjes integral:
\begin{equation}
\bm{\theta}_\tau = \bm{\theta}_0 + \int^\tau_0 g_{\phi}(\bm{\theta}_t, t) \frac{dZ_t}{dt} dt,
\end{equation}
where integration follows a controlling variable path $Z_t$. 
A self-supervised autoencoder (encoder $E$, decoder $D$) projects the anchor weights $\{\bm{\theta}_{t_i}\}_{i=1}^N$ into latent vectors $\{z_{t_i}\}_{i=1}^N$. 
We then use cubic spline interpolation on these latent vectors to approximate the path's rate of change, $\frac{dZ_t}{dt}$. 
This injects the global evolutionary trend into the integration and explicitly scales the conditional vector field $g_\phi=\frac{d{\bm{\theta_t}}}{dZ_t}$.

\subsection{Training Strategy}

WeightFlow's training has two stages: anchor pre-training and dynamics model training.
First, in the warm-up stage, we pre-train the backbone independently at each observation time $t_i$ to get the anchor weights $\{\bm{\theta}_{t_i}\}_{i=1}^N$. 
To ensure a smooth transition, we use a sequential-aligning strategy, initializing the training for time $t_i$ with the weights from $t_{i-1}$. 
These anchor weights are projected into a latent space and interpolated with cubic splines to form the control path $Z_t$.
The main stage optimizes the hypernetwork $g_\phi$ that governs weight dynamics. 
In the objective from Eq.~\ref{equ:obj_func}, the NLL loss is replaced with a weight reconstruction loss. 
This reconstruction loss is a mean squared error between the trajectory's weights and the pre-trained anchor weights $\bm{\hat{\theta}}_{t_i}$. 
Path energy is computed via a ODE solver.
We detail the training procedure in Algorithm~\ref{alg:training} of the Appendix.

\subsection{Time and Computational Complexity}

Let $L$ be the number of candidate states or mixture density parameters per dimension in a $d$-dimensional system. WeightFlow's complexity depends on its autoregressive backbone and the hypernetwork $g_{\phi}$.
\begin{itemize}
    \item \textbf{Backbone}: The backbone's size is independent of dimension $d$, with $O(L)$ space complexity. Its inference time complexity is $O(d)$ due to the autoregressive calculation.
    \item \textbf{Hypernetwork}: The hypernetwork's inference bottleneck is its multi-head attention mechanism. Its time complexity is $O(N_{nodes}^2)$, where the number of nodes, $N_{nodes}$, is determined by the backbone architecture (e.g., proportional to $L$) and is also independent of dimension $d$.
\end{itemize}
Thus, WeightFlow effectively avoids the curse of dimensionality, as its main computational costs do not scale with the state-space dimension.
\section{Numerical Results}
In this section, we empirically evaluate WeightFlow on a diverse set of simulated and real-world stochastic dynamics.

\begin{table*}[!ht]
    \centering
    \setlength{\tabcolsep}{1pt}

    \begin{tabular}{l *{10}{c}}
        \toprule
        & \multicolumn{2}{c}{Epidemic} & \multicolumn{2}{c}{Toggle Switch} & \multicolumn{2}{c}{Signalling Cascade1} & \multicolumn{2}{c}{Signalling Cascade2} & \multicolumn{2}{c}{Ecological Evolution} \\

        
        \cmidrule(lr){2-3} \cmidrule(lr){4-5} \cmidrule(lr){6-7} \cmidrule(lr){8-9} \cmidrule(lr){10-11}

        $\times 10^{-1}$ & $\mathcal{W} \downarrow $ & $JSD \downarrow$ & $\mathcal{W} \downarrow$ & $JSD \downarrow$ & $\mathcal{W} \downarrow$ & $JSD \downarrow$ & $\mathcal{W} \downarrow$ & $JSD \downarrow$ & $\mathcal{W} \downarrow$ & $JSD \downarrow$ \\
        \midrule
        
        Latent SDE    & $3.14_{\pm0.25}$ & $4.22_{\pm0.26}$ & $2.34_{\pm0.15}$ & $1.27_{\pm0.12}$ & $3.04_{\pm0.17}$ & $0.85_{\pm0.14}$ & $3.59_{\pm0.13}$ & $1.02_{\pm0.06}$ & $8.04_{\pm0.33}$ & $3.52_{\pm0.23}$ \\
        Neural MJP    & $1.88_{\pm0.14}$ & $1.61_{\pm0.14}$ & $2.13_{\pm0.26}$ & $0.94_{\pm0.14}$ & $1.69_{\pm0.15}$ & $0.30_{\pm0.04}$ & $1.68_{\pm0.11}$ & $0.36_{\pm0.01}$ & $1.68_{\pm0.18}$ & $0.51_{\pm0.03}$ \\
        T-IB          & $2.62_{\pm0.17}$ & $3.52_{\pm0.29}$ & $1.59_{\pm0.20}$ & $0.88_{\pm0.11}$ & $1.66_{\pm0.16}$ & $0.32_{\pm0.04}$ & $2.16_{\pm0.17}$ & $0.40_{\pm0.03}$ & $2.17_{\pm0.24}$ & $0.56_{\pm0.06}$ \\
        NLSB          & $3.27_{\pm0.28}$ & $1.65_{\pm0.14}$ & $2.97_{\pm0.30}$ & $1.32_{\pm0.20}$ & $1.50_{\pm0.10}$ & $0.39_{\pm0.05}$ & $1.83_{\pm0.15}$ & $0.48_{\pm0.05}$ & $3.09_{\pm0.26}$ & $2.80_{\pm0.32}$ \\
        DeepRUOT      & $1.78_{\pm0.13}$ & $1.08_{\pm0.09}$ & $1.37_{\pm0.17}$ & $0.77_{\pm0.05}$ & $\underline{0.52_{\pm0.02}}$ & $0.07_{\pm0.00}$ & $\underline{0.51_{\pm0.01}}$ & $0.08_{\pm0.00}$ & $3.27_{\pm0.31}$ & $2.47_{\pm0.36}$ \\
        
        \midrule
        WeightFlow$_{\text{O}}$        & $\underline{1.14_{\pm0.15}}$ & $\underline{0.36_{\pm0.02}}$ & $\underline{0.90_{\pm0.08}}$ & $\underline{0.35_{\pm0.02}}$ & $0.59_{\pm0.06}$ & $\underline{0.05_{\pm0.00}}$ & $0.64_{\pm0.08}$ & $\underline{0.07_{\pm0.01}}$ & $\mathbf{0.50_{\pm0.08}}$ & $\underline{0.13_{\pm0.01}}$ \\
        WeightFlow$_{\text{C}}$ & $\mathbf{1.10_{\pm0.14}}$ & $\mathbf{0.34_{\pm0.01}}$  & $\mathbf{0.82_{\pm0.07}}$ & $\mathbf{0.33_{\pm0.02}}$ & $\mathbf{0.48_{\pm0.03}}$ & $\mathbf{0.04_{\pm0.00}}$ & $\mathbf{0.49_{\pm0.07}}$ & $\mathbf{0.06_{\pm0.01}}$ & $\underline{0.51_{\pm0.07}}$ & $\mathbf{0.12_{\pm0.02}}$ \\
        Promotion  & 38.20\% & 68.52\% & 40.15\% & 57.14\% & 7.69\% & 42.86\% & 3.92\% & 25.00\% & 70.24\% & 76.47\%  \\
        
        \bottomrule
    \end{tabular}
    \caption{Statistic results on various stochastic dynamical systems over 10 runs. The best/second-best are bold/underlined.}
    \label{tab:main_results}
\end{table*}

\begin{figure*}[!ht]
    \centering
    \includegraphics[width=0.9\linewidth]{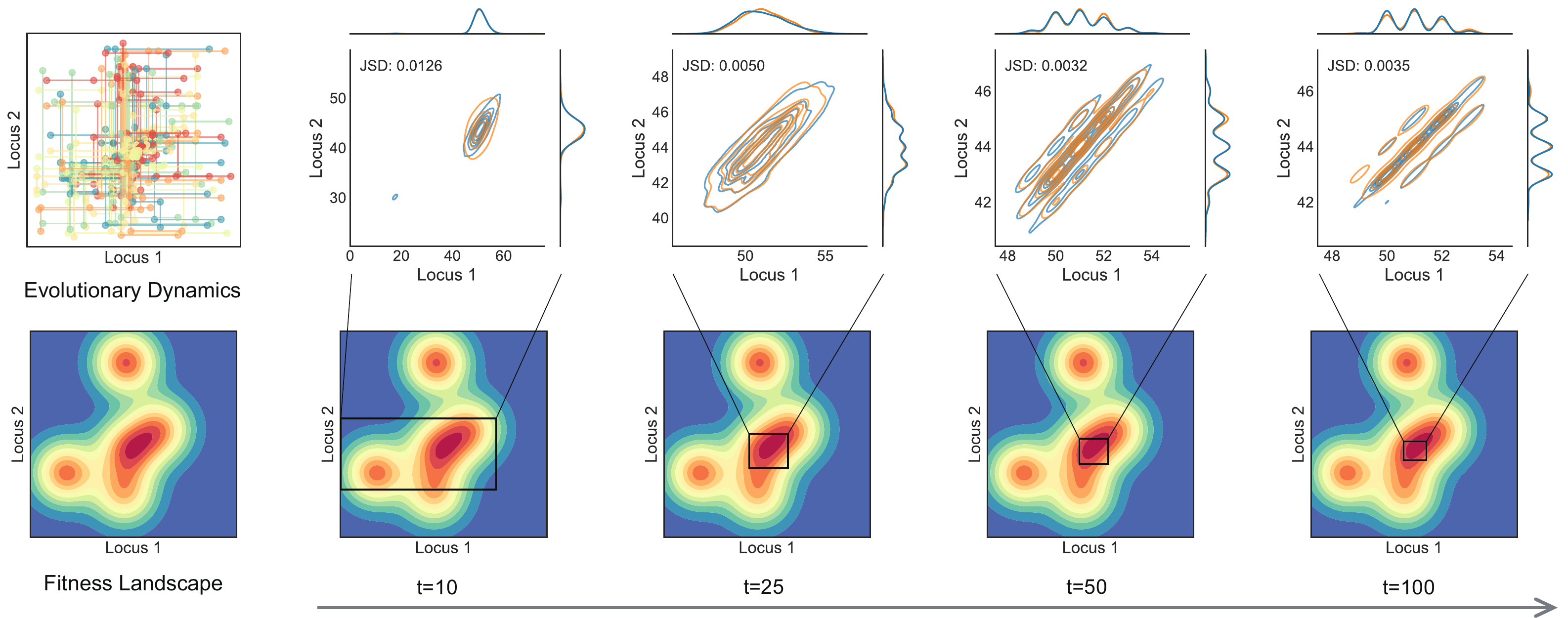}
    \caption{Joint and marginal distributions predicted by WeightFlow over time on the Ecological Evolution system. The central panels show the joint distribution contours of two loci overlaid on the fitness landscape (brighter yellow indicates higher fitness), with marginal densities on the outer axes.}
    \label{fig:main_sswm}
\end{figure*}

\subsection{Experimental Settings}

\subsubsection{Dynamical Systems}

We evaluate WeightFlow's capacity to model interdisciplinary stochastic dynamics in both discrete and continuous state spaces across a diverse set of simulated and real-world systems.
The simulated systems include four biochemical reaction networks~\cite{tang2023neural}: Epidemic, Toggle Switch, and Signalling Cascades 1 and 2, as well as one ecological adaptive evolution process~\cite{li2025predicting}. 
The real-world systems consist of two single-cell differentiation datasets: the human embryoid body~\cite{tong2020trajectorynet} and the \textit{in vitro} pancreatic $\beta$-cell~\cite{veres2019charting}. 
Detailed system dynamics, data collection for each dataset are provided in the Appendix.

\subsubsection{Baselines}

We benchmark WeightFlow against several state-of-the-art baselines that model stochastic dynamics from diverse perspectives. 
These include methods based on the direct modeling of stochastic differential equations, LatentSDE~\cite{li2020scalable}; Markov jump processes, NeuralMJP~\cite{seifner2023neural}; transfer operator, T-IB~\cite{federici2024latent}; Schrödinger bridges, NLSB~\cite{koshizuka2023neural}; and dynamic optimal transport, DeepRUOT~\cite{zhang25learning}. 
The configurations are detailed in the Appendix~\ref{sec_supp:baselines}.

\subsubsection{Metrics}
We measure the difference between probability densities using several standard metrics from related work. These include the Wasserstein ($W$) distance~\cite{zhang25learning}, Maximum Mean Discrepancy ($MMD$)~\cite{kapusniak2024metric}, and the Jensen-Shannon Divergence ($JSD$)~\cite{li2025predicting}. 
The detailed computational formulas for these metrics are provided in the Appendix~\ref{sec_supp:metrics}.

\subsection{Main Results}

In all experiments, unless otherwise specified, WeightFlow uses a RNN with Gated Recurrent Units (GRU) as its default backbone. 
The subscripts $_O$ and $_C$ denote implementations based on ODEs and CDEs, respectively. 
When not explicitly subscripted, WeightFlow defaults to the CDE-based implementation. 
The specific configuration of WeightFlow for each system is provided in the Appendix~\ref{sec_supp:weightflow_conf}.

\begin{figure*}[!ht]
    \centering
    \includegraphics[width=0.95\linewidth]{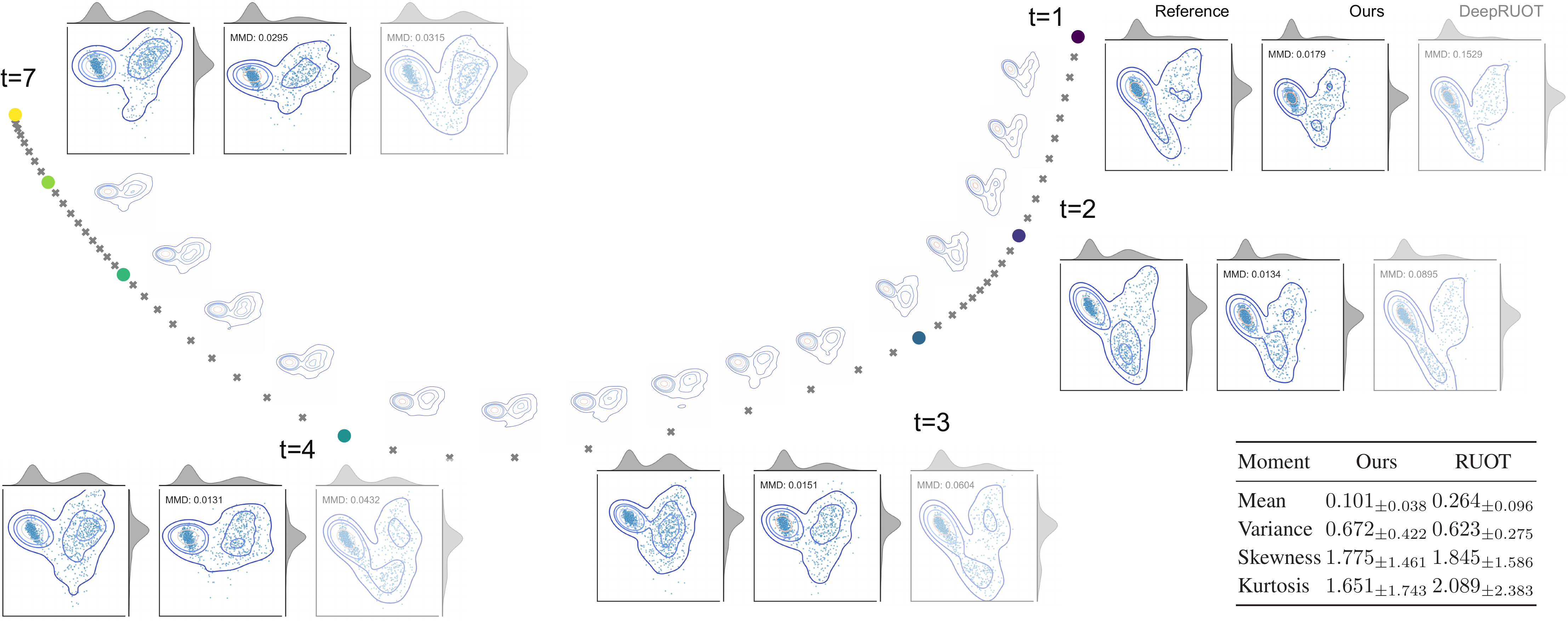}
    \caption{Weight prediction for $\beta$-cell differentiation. The trajectory shows the continuous evolution of weights and corresponding ensemble distributions (PCA). Highlights compare the results at observed snapshots: reference (left), WeightFlow's prediction (center), and DeepRUOT's prediction (right). The table reports the average relative error of the first four statistical moments.}
    \label{fig:veres}
\end{figure*}

\subsubsection{Simulated Datasets}

For five discrete systems, we simulate 1,000 evolution trajectories of 100 steps each. 
Training data is created by down-sampling each trajectory to 10 time points, while evaluation is performed on the full 100 time points.

Across all systems, WeightFlow outperforms baselines, improving the Wasserstein distance and Jensen-Shannon divergence by 32.04\% and 53.99\% on average, respectively (Table~\ref{tab:main_results}). 
These results demonstrate WeightFlow's ability to capture probabilistic dynamics and validate our weight-space modeling approach. 
Compared to DeepRUOT, WeightFlow provides more stable long-term predictions by avoiding late-stage error accumulation (Appendix Figure~\ref{fig_supp:main_w1_curves}).

A case study on an ecological evolution system further demonstrates WeightFlow's performance in Figure~\ref{fig:main_sswm}. 
In this system, a 2D genetic phenotype evolves towards a global peak on a fitness landscape. 
WeightFlow accurately predicts the distribution throughout the evolution, capturing both macroscopic landscape shifts and fine-grained local dynamics.

\begin{table}[t]
    \centering
    \setlength{\tabcolsep}{1pt}
    \resizebox{\linewidth}{!}{
    \begin{tabular}{lcccc}
        \toprule
         & \multicolumn{2}{c}{$\beta$-cell} & \multicolumn{2}{c}{Embryoid} \\
         \cmidrule(lr){2-3} \cmidrule(lr){4-5}
         
         & $\mathcal{W} \downarrow$ & $MMD \downarrow$ & $\mathcal{W} \downarrow$ & $MMD \downarrow$ \\
         \midrule
         NLSB    & $11.18_{\pm0.22}$ & $0.07_{\pm0.01}$ & $14.39_{\pm0.40}$ & $0.10_{\pm0.03}$ \\
         RUOT    & $10.99_{\pm0.20}$ & $0.06_{\pm0.01}$ & $14.71_{\pm0.49}$ & $0.15_{\pm0.03}$ \\
         WeightFlow$_{\text{O}}$ & $\underline{9.86_{\pm0.25}}$ & $\underline{0.02_{\pm0.01}}$ & $\mathbf{13.72_{\pm0.39}}$ & $\mathbf{0.02_{\pm0.01}}$\\
         WeightFlow$_{\text{C}}$ & $\mathbf{9.73_{\pm0.27}}$ & $\mathbf{0.02_{\pm0.01}}$ & $\underline{14.18_{\pm0.43}}$ & $\underline{0.03_{\pm0.01}}$\\
         Promotion     & 11.44\% & 71.88\% & 4.64\% & 81.37\% \\
         \bottomrule
    \end{tabular}
    }
    \caption{Statistical results on real-world cell datasets.}
    \label{tab:scRNA}
\end{table}

\begin{figure}[ht]
    \centering
    \includegraphics[width=0.98\linewidth]{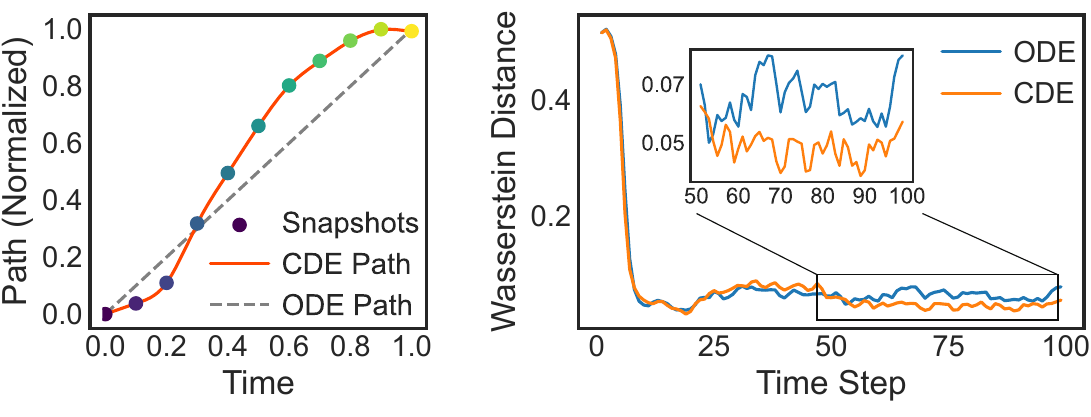}
    \caption{Comparing the path (left) and average error (right) for ODE and CDE models on the Toggle Switch system.}
    \label{fig:path}
\end{figure}

\subsubsection{Real-world Datasets}

We evaluate WeightFlow on two continuous-space, single-cell differentiation datasets. 
The datasets track gene expression via scRNA-seq during \textit{in vitro} pancreatic $\beta$-cell and human embryoid body differentiation. 
Following related work~\cite{koshizuka2023neural,zhang25learning}, we project the high-dimensional gene data onto 30 and 100 principal components. 
We then use a Mixture Density Network (MDN) as WeightFlow's backbone to model the resulting continuous distributions.
The detailed configuration is provided in the Appendix~\ref{sec_supp:van_conf}.

WeightFlow significantly outperforms baselines on these systems, demonstrating its effectiveness in continuous state spaces (Table~\ref{tab:scRNA}). 
A case study of the pancreatic $\beta$-cell visualizes the predicted paths in Figure~\ref{fig:veres}. 
We compare first four statistical moments of the predicted distributions against the ground truth. 
While both WeightFlow and baseline capture the global shape with comparable errors in mean and variance, WeightFlow is significantly more accurate for higher-order moments like skewness and kurtosis. 
This shows WeightFlow accurately reproduces fine-grained distribution structures like asymmetry and extreme values. 
Similar results are observed for the embryoid body dataset in the Appendix Figure~\ref{fig_supp:embryoid}.

\begin{figure*}[t]
    \centering
    \begin{subfigure}[t]{0.32\textwidth}
        \includegraphics[width=\textwidth]{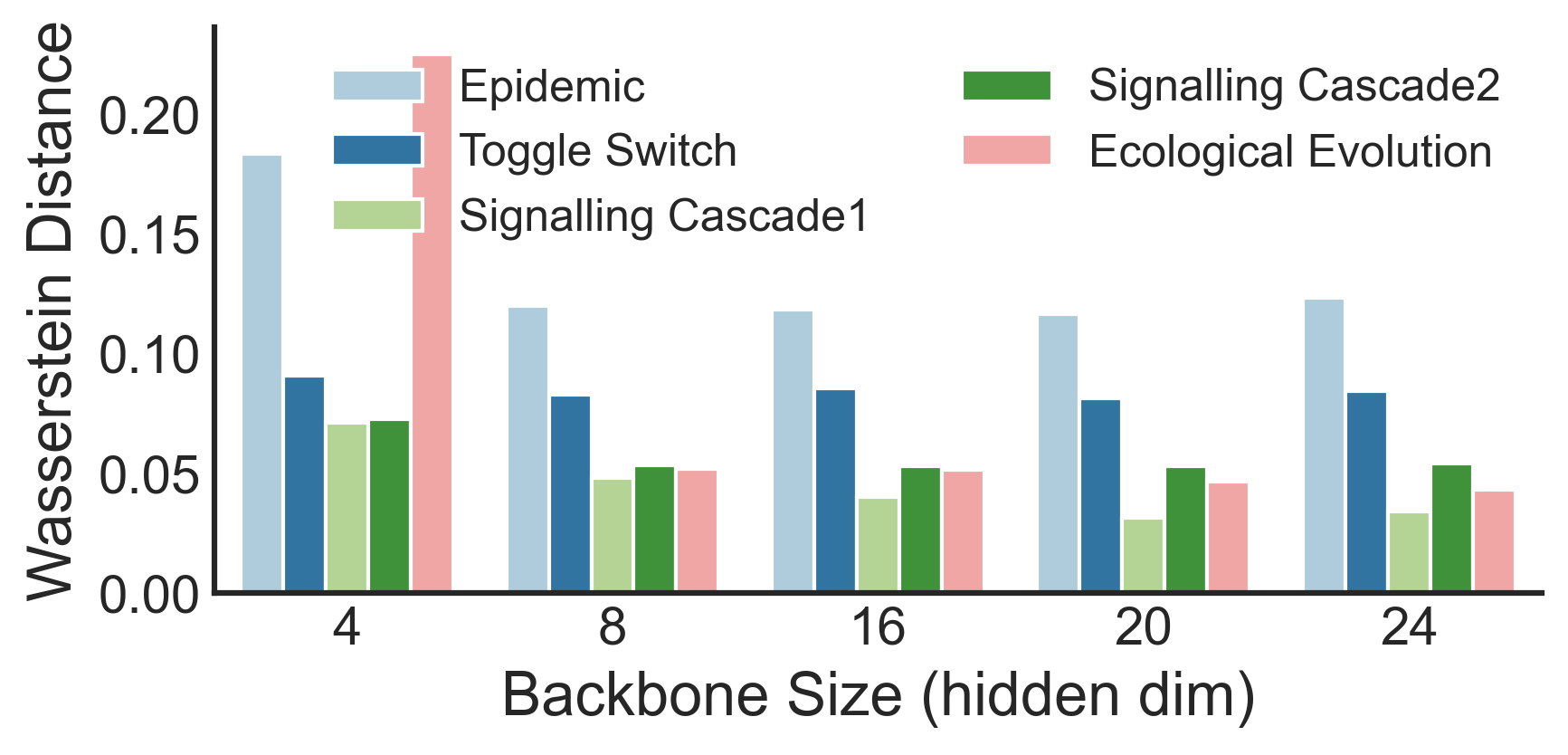}
        \caption{Backbone Size}
    \end{subfigure}
    \hfill
    \begin{subfigure}[t]{0.32\textwidth}
        \includegraphics[width=\textwidth]{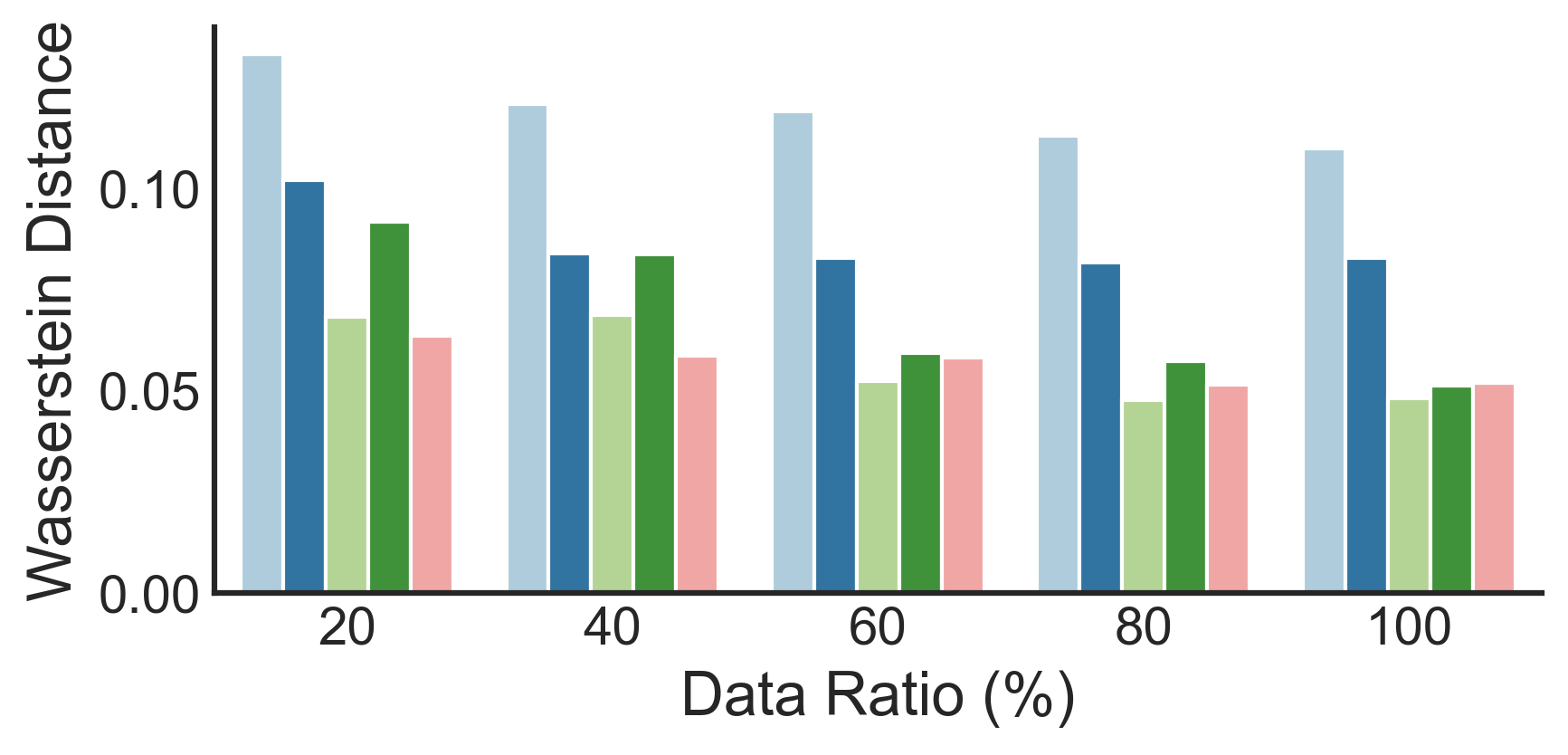}
        \caption{Data Ratio}
    \end{subfigure}
    \hfill
    \begin{subfigure}[t]{0.32\textwidth}
        \includegraphics[width=\textwidth]{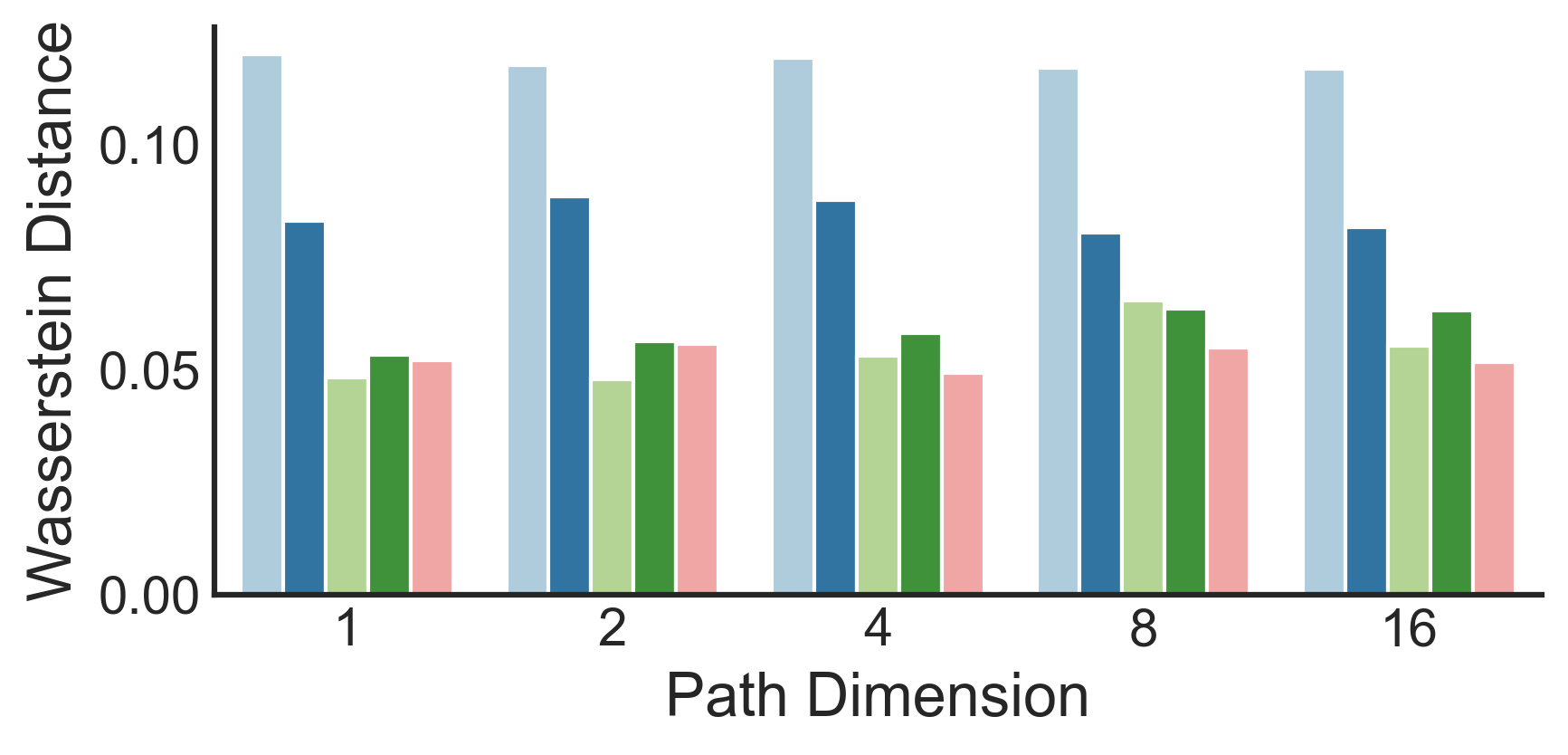}
        \caption{Path Dimension}
    \end{subfigure}

    \caption{Average Wasserstein distance for WeightFlow as a function of (a) backbone size, (b) data ratio, and (c) path dimension across five dynamical systems.}
    \label{fig:robustness}
\end{figure*}

\begin{figure*}[t]
    \centering

    \begin{subfigure}[b]{0.315\linewidth}
        \centering
        \includegraphics[width=\linewidth]{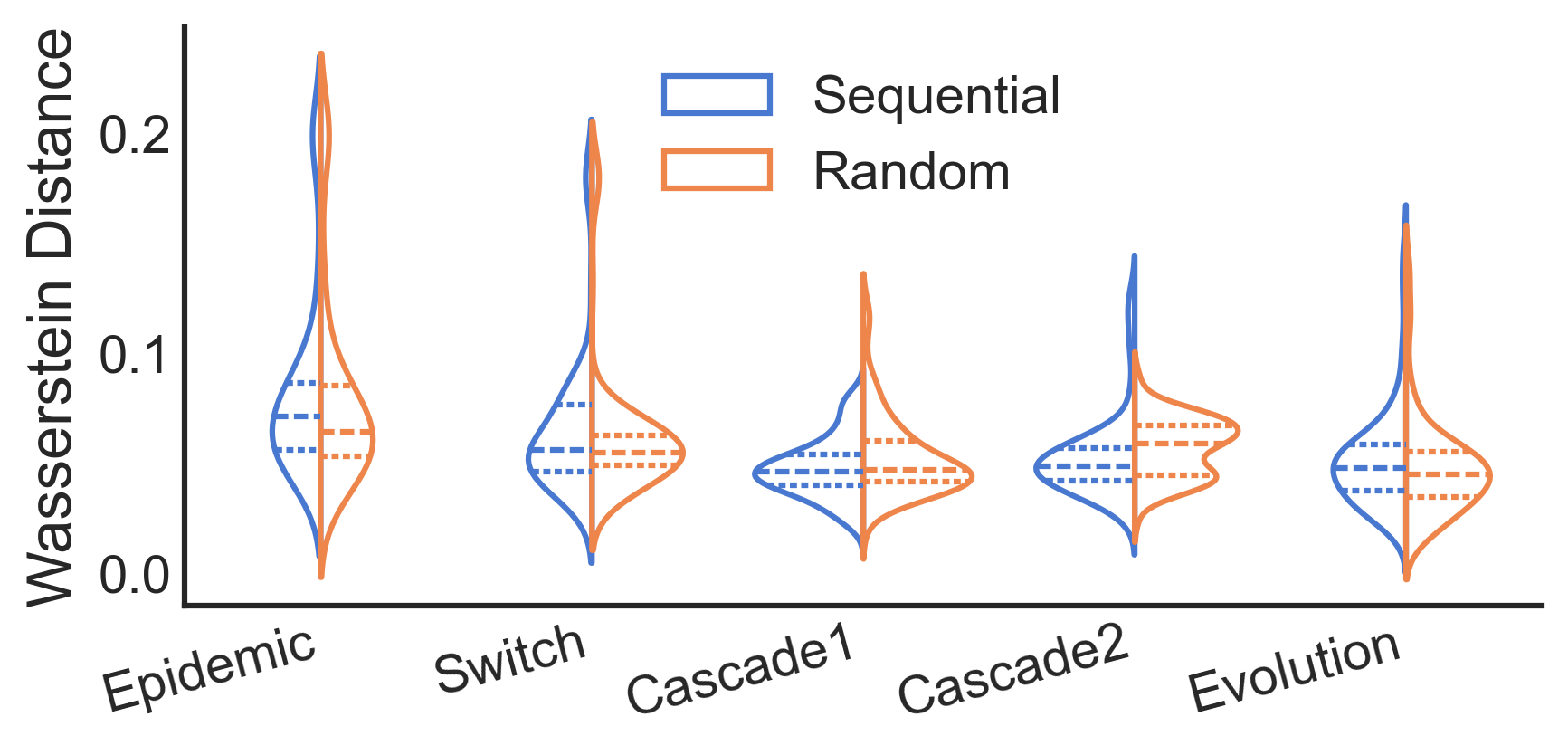}
        \caption{Autoregressive Order}
    \end{subfigure}
    \hfill
    \begin{subfigure}[b]{0.315\linewidth}
        \centering
        \includegraphics[width=\linewidth]{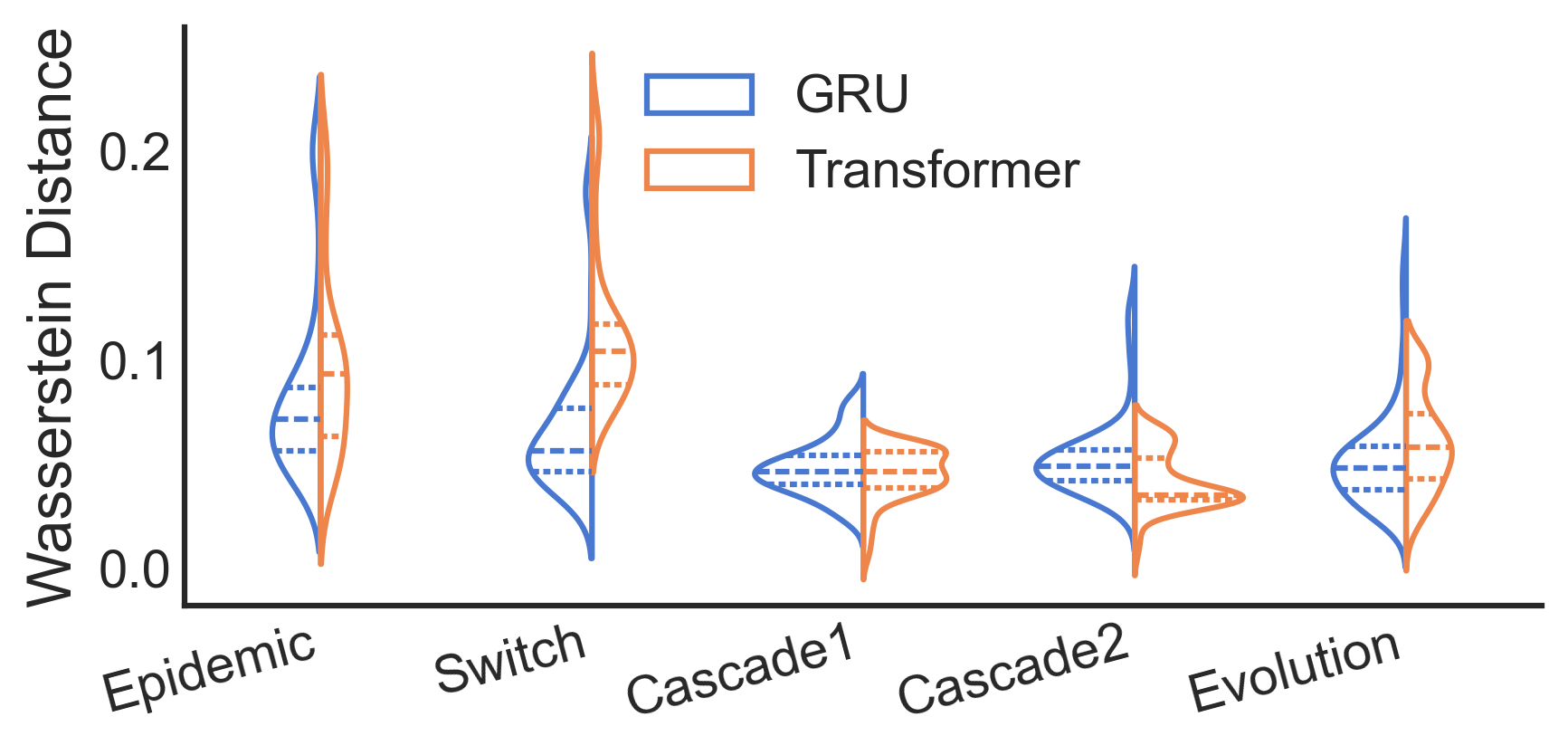}
        \caption{Backbone Architecture}
    \end{subfigure}
    \hfill
    \begin{subfigure}[b]{0.315\linewidth}
        \centering
        \includegraphics[width=\linewidth]{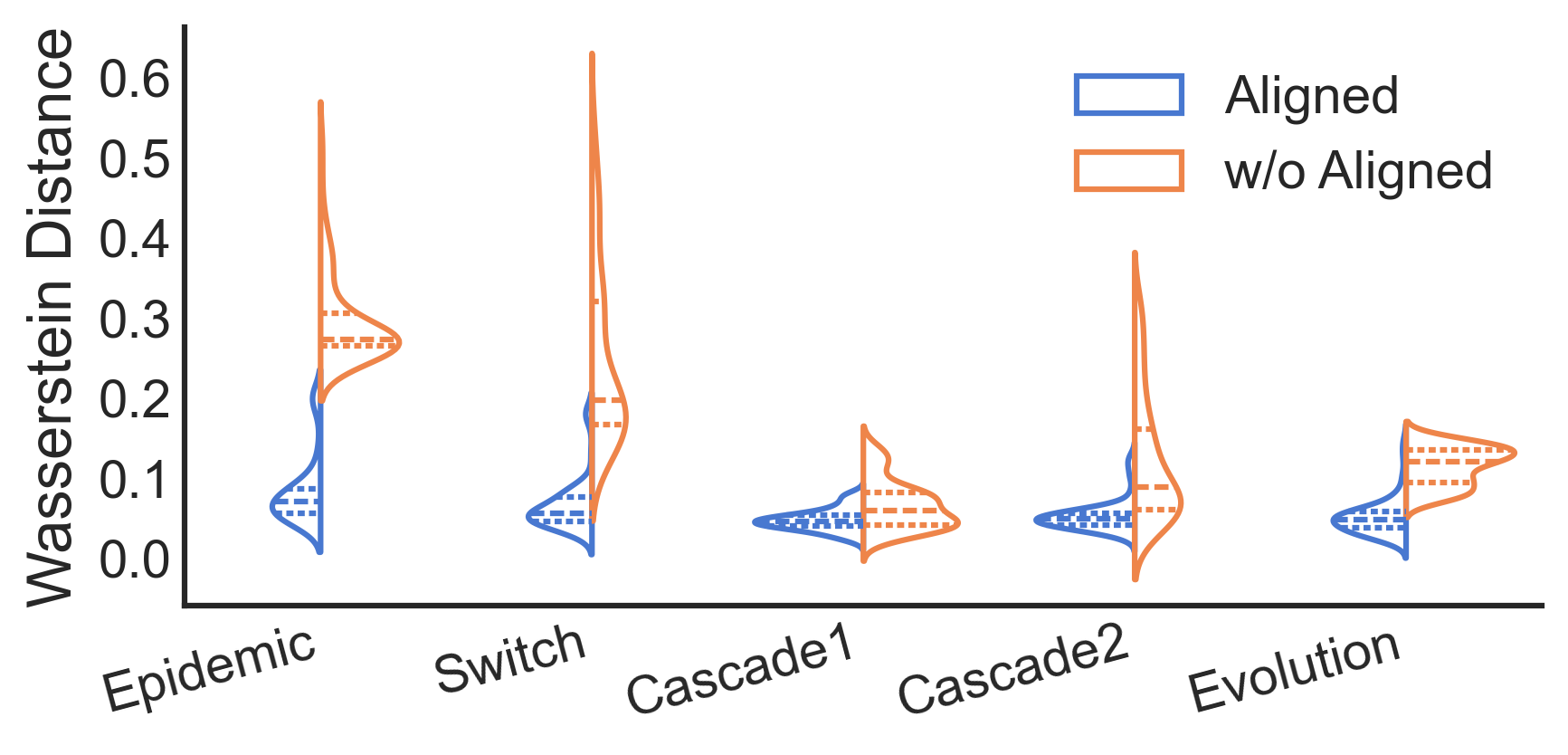}
        \caption{Sequential Aligning}
    \end{subfigure}

    \caption{Wasserstein distance distributions for WeightFlow across five systems under various settings: (a) autoregressive order, (b) backbone architecture, and (c) sequential aligning (with versus without). The outliers are truncated to highlight differences.}
    \label{fig:backbone_order}
\end{figure*}

\begin{figure}[t]
    \centering
    \includegraphics[width=\linewidth]{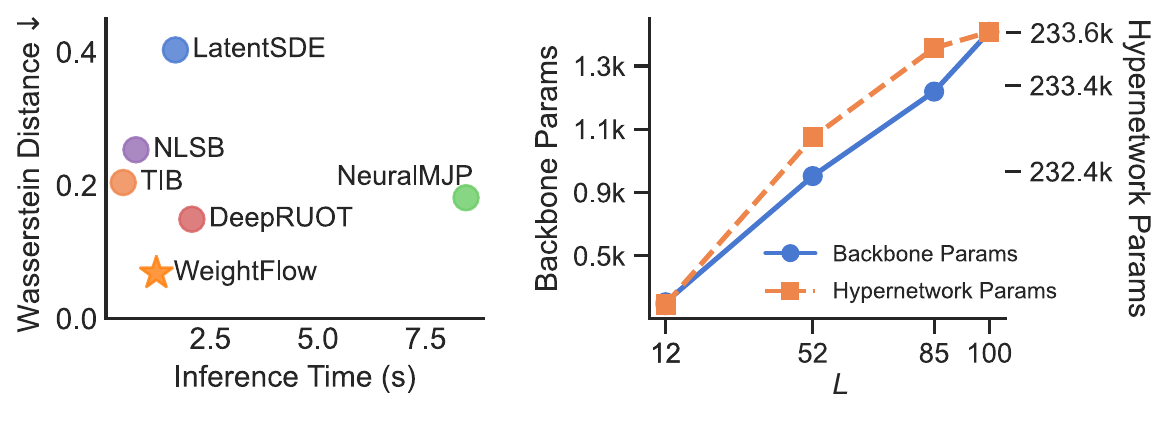}
    \caption{Inference time (left) and model size (right).}
    \label{fig:time_space}
\end{figure}

\subsection{Ablation Study}

\subsubsection{Path Integral}
WeightFlow$_{\text{C}}$, which integrates along the latent path $Z$, generally outperforms WeightFlow$_{\text{O}}$, which integrates directly with respect to time $t$ (Table~\ref{tab:main_results}). 
Taking the Toggle Switch system as example (Figure~\ref{fig:path}), the integration path of WeightFlow$_{\text{C}}$ is able to capture the slow-down behavior during the later convergence phase, which is consistent with the system evolution trend. 
This, in turn, guides the weight evolution to decelerate in later stages, which corresponds to WeightFlow$_{\text{C}}$ having lower long-term errors. 
The results for other systems are in the Appendix Figure~\ref{fig_supp:path_integral}.

\subsubsection{Path Projection}
We replace the autoencoder with different projection algorithms for computing the control variable $Z_t$. 
The results indicate that WeightFlow consistently achieves the best performance when using the autoencoder-based path projection (Appendix Table~\ref{tab_supp:path_method}). 
Visualizations of different paths are provided in the Appendix Figure~\ref{fig_supp:path_projection}.

\subsubsection{Sequential Aligning}
We evaluate the importance of the sequential warm-start strategy by disabling it during the warm-up stage and instead using random initializations at each observation time. 
The resulting performance degradation confirms the importance of the sequential strategy (Figure~\ref{fig:backbone_order}c).

\subsection{Sensitivity Analysis}


\subsubsection{Backbone Size}
WeightFlow uses the backbone to parameterize probability distributions. Therefore, the backbone's size affects the quality of its fit to complex distributions. 
We test the performance of a single-layer RNN with different latent dimensions (Figure~\ref{fig:robustness}a). 
For these systems, a hidden dimension of just 8 is sufficient ($< 1k$ parameters). 
This benefits from WeightFlow's decoupled design, where the backbone only models static distributions, leaving the temporal evolution to the hypernetwork.

\subsubsection{Data Size}
We evaluate WeightFlow's dependency on the amount of observational data by reducing the number of training samples (Figure~\ref{fig:robustness}b). 
The results show that WeightFlow's performance does not collapse even when the data is reduced to only 20\%, but instead shows only a limited decline.

\subsubsection{Path Dimension}
The dimension of the latent path affects WeightFlow's capacity to represent weight dynamics. 
We find that WeightFlow is not sensitive to the path dimension (Figure~\ref{fig:robustness}c). 
A 1-dim path is even slightly superior to that with higher dimensions. 
This could be because the weight evolution dynamics occur on a low-dimensional manifold (Figure~\ref{fig:veres}), for which 1-dim path is sufficient.

\subsubsection{Autoregressive Order}
Mathematically, WeightFlow's autoregressive backbone models the joint probability distribution in the form of Eq.~\ref{equ:van}, which supports any autoregressive order. 
We test the performance with a randomly shuffled autoregressive order (Figure~\ref{fig:backbone_order}a). 
The results indicate that the prediction error of the random order is indeed very close to that of the sequential order.

\subsubsection{Backbone Architecture}
WeightFlow supports any implementation for its backbone. 
Therefore, we compare the performance of a Transformer-based WeightFlow (Figure~\ref{fig:backbone_order}b). 
The results show that the predictive performance of WeightFlow is similar for both backbone architectures.

\subsection{Time and Space Cost}

We report the inference time versus error for various methods, as well as the parameter size of WeightFlow in Figure~\ref{fig:time_space}. 
WeightFlow achieves the Pareto frontier in terms of inference time and error. 
Furthermore, its parameter size scales only linearly with the number of candidate states, $L$.
\section{Conclusions}

We propose WeightFlow, a new framework for modeling stochastic dynamics in the weight space of a neural network. 
By learning the continuous evolution of a weight graph, WeightFlow effectively circumvents the curse of dimensionality in high-dimensional state spaces.
We find WeightFlow exhibits good robustness to data quantity and model size, and that a very low-dimensional latent path (even 1-dim) is sufficient to capture the complex dynamics of weight evolution. 
This suggests the evolution of the ensemble distribution may lie on a concise manifold. 
Furthermore, compared to baselines, WeightFlow more accurately captures the higher-order moments, indicating a stronger ability to characterize fine structures.
This indicates WeightFlow is a promising solution for an effective solution to stochastic dynamics modeling.

\section{Acknowledgments}
This work is supported in part by the National Key Research and Development Program of China under 2024YFC3307603, and the National Natural Science Foundation of China under 92270114 and 62171260. 

\bibliography{z_reference}

\begin{thebibliography}{42}
\providecommand{\natexlab}[1]{#1}

\bibitem[{Anderson and Farazmand(2024)}]{anderson2024fisher}
Anderson, W.; and Farazmand, M. 2024.
\newblock Fisher information and shape-morphing modes for solving the Fokker--Planck equation in higher dimensions.
\newblock \emph{Applied Mathematics and Computation}, 467: 128489.

\bibitem[{Benamou and Brenier(2000)}]{benamou2000computational}
Benamou, J.-D.; and Brenier, Y. 2000.
\newblock A computational fluid mechanics solution to the Monge-Kantorovich mass transfer problem.
\newblock \emph{Numerische Mathematik}, 84(3): 375--393.

\bibitem[{Chen and Majda(2017)}]{chen2017beating}
Chen, N.; and Majda, A.~J. 2017.
\newblock Beating the curse of dimension with accurate statistics for the Fokker--Planck equation in complex turbulent systems.
\newblock \emph{Proceedings of the National Academy of Sciences}, 114(49): 12864--12869.

\bibitem[{Chen and Majda(2018)}]{chen2018efficient}
Chen, N.; and Majda, A.~J. 2018.
\newblock Efficient statistically accurate algorithms for the Fokker--Planck equation in large dimensions.
\newblock \emph{Journal of Computational Physics}, 354: 242--268.

\bibitem[{Chen et~al.(2018)Chen, Rubanova, Bettencourt, and Duvenaud}]{chen2018neural}
Chen, R.~T.; Rubanova, Y.; Bettencourt, J.; and Duvenaud, D.~K. 2018.
\newblock Neural ordinary differential equations.
\newblock \emph{Advances in neural information processing systems}, 31.

\bibitem[{Chen et~al.(2023)Chen, Liu, Tao, and Theodorou}]{chen2023deep}
Chen, T.; Liu, G.-h.; Tao, M.; and Theodorou, E.~A. 2023.
\newblock Deep multi-marginal momentum Schr{\"o}dinger bridge.
\newblock In \emph{Proceedings of the 37th International Conference on Neural Information Processing Systems}, 57058--57086.

\bibitem[{Federici et~al.(2024)Federici, Forr{\'e}, Tomioka, and Veeling}]{federici2024latent}
Federici, M.; Forr{\'e}, P.; Tomioka, R.; and Veeling, B.~S. 2024.
\newblock Latent Representation and Simulation of Markov Processes via Time-Lagged Information Bottleneck.
\newblock In \emph{The Twelfth International Conference on Learning Representations}.

\bibitem[{Feng et~al.(2025)Feng, Huang, Liao, Liu, Liu, and Yan}]{feng2025singer}
Feng, M.; Huang, Y.; Liao, W.; Liu, Y.; Liu, Y.; and Yan, J. 2025.
\newblock SINGER: Stochastic Network Graph Evolving Operator for High Dimensional PDEs.
\newblock In \emph{The Thirteenth International Conference on Learning Representations}.

\bibitem[{Gao, Qiao, and Huang(2022)}]{gao2022unitvelo}
Gao, M.; Qiao, C.; and Huang, Y. 2022.
\newblock UniTVelo: temporally unified RNA velocity reinforces single-cell trajectory inference.
\newblock \emph{Nature Communications}, 13(1): 6586.

\bibitem[{Gardiner(2009)}]{gardiner2009stochastic}
Gardiner, C. 2009.
\newblock \emph{Stochastic methods}, volume~4.
\newblock Springer Berlin Heidelberg.

\bibitem[{Hornik, Stinchcombe, and White(1989)}]{hornik1989multilayer}
Hornik, K.; Stinchcombe, M.; and White, H. 1989.
\newblock Multilayer feedforward networks are universal approximators.
\newblock \emph{Neural networks}, 2(5): 359--366.

\bibitem[{Huguet et~al.(2022)Huguet, Magruder, Tong, Fasina, Kuchroo, Wolf, and Krishnaswamy}]{huguet2022manifold}
Huguet, G.; Magruder, D.~S.; Tong, A.; Fasina, O.; Kuchroo, M.; Wolf, G.; and Krishnaswamy, S. 2022.
\newblock Manifold interpolating optimal-transport flows for trajectory inference.
\newblock \emph{Advances in neural information processing systems}, 35: 29705--29718.

\bibitem[{Jiang et~al.(2021)Jiang, Fu, Yan, Li, Du, Cao, Qian, and Grima}]{jiang2021neural}
Jiang, Q.; Fu, X.; Yan, S.; Li, R.; Du, W.; Cao, Z.; Qian, F.; and Grima, R. 2021.
\newblock Neural network aided approximation and parameter inference of non-Markovian models of gene expression.
\newblock \emph{Nature communications}, 12(1): 2618.

\bibitem[{Kapusniak et~al.(2024)Kapusniak, Potaptchik, Reu, Zhang, Tong, Bronstein, Bose, and Di~Giovanni}]{kapusniak2024metric}
Kapusniak, K.; Potaptchik, P.; Reu, T.; Zhang, L.; Tong, A.; Bronstein, M.; Bose, J.; and Di~Giovanni, F. 2024.
\newblock Metric flow matching for smooth interpolations on the data manifold.
\newblock \emph{Advances in Neural Information Processing Systems}, 37: 135011--135042.

\bibitem[{Kidger et~al.(2020)Kidger, Morrill, Foster, and Lyons}]{kidger2020neural}
Kidger, P.; Morrill, J.; Foster, J.; and Lyons, T. 2020.
\newblock Neural controlled differential equations for irregular time series.
\newblock \emph{Advances in neural information processing systems}, 33: 6696--6707.

\bibitem[{Knobloch and Wiesenfeld(1983)}]{knobloch1983bifurcations}
Knobloch, E.; and Wiesenfeld, K. 1983.
\newblock Bifurcations in fluctuating systems: The center-manifold approach.
\newblock \emph{Journal of Statistical Physics}, 33(3): 611--637.

\bibitem[{Koshizuka and Sato(2023)}]{koshizuka2023neural}
Koshizuka, T.; and Sato, I. 2023.
\newblock Neural Lagrangian Schr{\"o}dinger Bridge: Diffusion Modeling for Population Dynamics.
\newblock In \emph{The Eleventh International Conference on Learning Representations.}

\bibitem[{Kostic et~al.(2023)Kostic, Novelli, Grazzi, Lounici, and Pontil}]{kostic2023learning}
Kostic, V.~R.; Novelli, P.; Grazzi, R.; Lounici, K.; and Pontil, M. 2023.
\newblock Learning invariant representations of time-homogeneous stochastic dynamical systems.
\newblock \emph{arXiv preprint arXiv:2307.09912}.

\bibitem[{Lenton et~al.(2019)Lenton, Rockstr{\"o}m, Gaffney, Rahmstorf, Richardson, Steffen, and Schellnhuber}]{lenton2019climate}
Lenton, T.~M.; Rockstr{\"o}m, J.; Gaffney, O.; Rahmstorf, S.; Richardson, K.; Steffen, W.; and Schellnhuber, H.~J. 2019.
\newblock Climate tipping points—too risky to bet against.
\newblock \emph{Nature}, 575(7784): 592--595.

\bibitem[{Li et~al.(2025{\natexlab{a}})Li, Cheng, Wang, Liao, and Li}]{li2025predicting3}
Li, R.; Cheng, J.; Wang, H.; Liao, Q.; and Li, Y. 2025{\natexlab{a}}.
\newblock Predicting the dynamics of complex system via multiscale diffusion autoencoder.
\newblock \emph{arXiv preprint arXiv:2505.02450}.

\bibitem[{Li et~al.(2025{\natexlab{b}})Li, Wang, Ding, Yuan, Liao, and Li}]{li2025predicting2}
Li, R.; Wang, H.; Ding, J.; Yuan, Y.; Liao, Q.; and Li, Y. 2025{\natexlab{b}}.
\newblock Predicting Dynamical Systems across Environments via Diffusive Model Weight Generation.
\newblock \emph{arXiv preprint arXiv:2505.13919}.

\bibitem[{Li et~al.(2025{\natexlab{c}})Li, Wang, Liao, and Li}]{li2025predicting}
Li, R.; Wang, H.; Liao, Q.; and Li, Y. 2025{\natexlab{c}}.
\newblock Predicting the Energy Landscape of Stochastic Dynamical System via Physics-informed Self-supervised Learning.
\newblock In \emph{The Thirteenth International Conference on Learning Representations}.

\bibitem[{Li et~al.(2020)Li, Wong, Chen, and Duvenaud}]{li2020scalable}
Li, X.; Wong, T.-K.~L.; Chen, R.~T.; and Duvenaud, D. 2020.
\newblock Scalable gradients for stochastic differential equations.
\newblock In \emph{International Conference on Artificial Intelligence and Statistics}, 3870--3882. PMLR.

\bibitem[{Liu and Wang(2024)}]{liu2024distilling}
Liu, C.; and Wang, J. 2024.
\newblock Distilling dynamical knowledge from stochastic reaction networks.
\newblock \emph{Proceedings of the National Academy of Sciences}, 121(14): e2317422121.

\bibitem[{Liu et~al.(2025)Liu, Li, Wang, Yu, Liu, Ding, and Li}]{liu2025beyond}
Liu, J.; Li, R.; Wang, H.; Yu, Z.; Liu, C.; Ding, J.; and Li, Y. 2025.
\newblock Beyond Equilibrium: Non-Equilibrium Foundations Should Underpin Generative Processes in Complex Dynamical Systems.
\newblock \emph{arXiv preprint arXiv:2505.18621}.

\bibitem[{Mardt et~al.(2018)Mardt, Pasquali, Wu, and No{\'e}}]{mardt2018vampnets}
Mardt, A.; Pasquali, L.; Wu, H.; and No{\'e}, F. 2018.
\newblock VAMPnets for deep learning of molecular kinetics.
\newblock \emph{Nature communications}, 9(1): 5.

\bibitem[{Neklyudov et~al.(2023)Neklyudov, Brekelmans, Severo, and Makhzani}]{neklyudov2023action}
Neklyudov, K.; Brekelmans, R.; Severo, D.; and Makhzani, A. 2023.
\newblock Action matching: Learning stochastic dynamics from samples.
\newblock In \emph{International conference on machine learning}, 25858--25889. PMLR.

\bibitem[{Onken et~al.(2021)Onken, Fung, Li, and Ruthotto}]{onken2021ot}
Onken, D.; Fung, S.~W.; Li, X.; and Ruthotto, L. 2021.
\newblock Ot-flow: Fast and accurate continuous normalizing flows via optimal transport.
\newblock In \emph{Proceedings of the AAAI Conference on Artificial Intelligence}, volume~35, 9223--9232.

\bibitem[{Qu et~al.(2025)Qu, Cheng, Sefik, Stanley~III, Landa, Strino, Platt, Garritano, Odell, Coifman et~al.}]{qu2025gene}
Qu, R.; Cheng, X.; Sefik, E.; Stanley~III, J.~S.; Landa, B.; Strino, F.; Platt, S.; Garritano, J.; Odell, I.~D.; Coifman, R.; et~al. 2025.
\newblock Gene trajectory inference for single-cell data by optimal transport metrics.
\newblock \emph{Nature biotechnology}, 43(2): 258--268.

\bibitem[{Risken(1989)}]{risken1989fokker}
Risken, H. 1989.
\newblock Fokker-planck equation.
\newblock In \emph{The Fokker-Planck equation: methods of solution and applications}, 63--95. Springer.

\bibitem[{Schreiner, Winther, and Olsson(2023)}]{schreiner2023implicit}
Schreiner, M.; Winther, O.; and Olsson, S. 2023.
\newblock Implicit transfer operator learning: Multiple time-resolution models for molecular dynamics.
\newblock \emph{Advances in Neural Information Processing Systems}, 36: 36449--36462.

\bibitem[{Seifner and S{\'a}nchez(2023)}]{seifner2023neural}
Seifner, P.; and S{\'a}nchez, R.~J. 2023.
\newblock Neural markov jump processes.
\newblock In \emph{International Conference on Machine Learning}, 30523--30552. PMLR.

\bibitem[{Sha et~al.(2024)Sha, Qiu, Zhou, and Nie}]{sha2024reconstructing}
Sha, Y.; Qiu, Y.; Zhou, P.; and Nie, Q. 2024.
\newblock Reconstructing growth and dynamic trajectories from single-cell transcriptomics data.
\newblock \emph{Nature Machine Intelligence}, 6(1): 25--39.

\bibitem[{Tang, Weng, and Zhang(2023)}]{tang2023neural}
Tang, Y.; Weng, J.; and Zhang, P. 2023.
\newblock Neural-network solutions to stochastic reaction networks.
\newblock \emph{Nature Machine Intelligence}, 5(4): 376--385.

\bibitem[{Terpin et~al.(2024)Terpin, Lanzetti, Gadea, and Dorfler}]{terpin2024learning}
Terpin, A.; Lanzetti, N.; Gadea, M.; and Dorfler, F. 2024.
\newblock Learning diffusion at lightspeed.
\newblock \emph{Advances in Neural Information Processing Systems}, 37: 6797--6832.

\bibitem[{Tong et~al.(2020)Tong, Huang, Wolf, Van~Dijk, and Krishnaswamy}]{tong2020trajectorynet}
Tong, A.; Huang, J.; Wolf, G.; Van~Dijk, D.; and Krishnaswamy, S. 2020.
\newblock Trajectorynet: A dynamic optimal transport network for modeling cellular dynamics.
\newblock In \emph{International conference on machine learning}, 9526--9536. PMLR.

\bibitem[{Van~Kampen(1992)}]{van1992stochastic}
Van~Kampen, N.~G. 1992.
\newblock \emph{Stochastic processes in physics and chemistry}, volume~1.
\newblock Elsevier.

\bibitem[{Veres et~al.(2019)Veres, Faust, Bushnell, Engquist, Kenty, Harb, Poh, Sintov, G{\"u}rtler, Pagliuca et~al.}]{veres2019charting}
Veres, A.; Faust, A.~L.; Bushnell, H.~L.; Engquist, E.~N.; Kenty, J. H.-R.; Harb, G.; Poh, Y.-C.; Sintov, E.; G{\"u}rtler, M.; Pagliuca, F.~W.; et~al. 2019.
\newblock Charting cellular identity during human in vitro $\beta$-cell differentiation.
\newblock \emph{Nature}, 569(7756): 368--373.

\bibitem[{Wagner(2023)}]{wagner2023evolvability}
Wagner, A. 2023.
\newblock Evolvability-enhancing mutations in the fitness landscapes of an RNA and a protein.
\newblock \emph{Nature Communications}, 14(1): 3624.

\bibitem[{Wu, Wang, and Zhang(2019)}]{wu2019solving}
Wu, D.; Wang, L.; and Zhang, P. 2019.
\newblock Solving statistical mechanics using variational autoregressive networks.
\newblock \emph{Physical review letters}, 122(8): 080602.

\bibitem[{Zhang et~al.(2025)Zhang, Zhu, Xie, and He}]{zhang2025physics}
Zhang, Y.; Zhu, J.; Xie, H.; and He, Y. 2025.
\newblock Physics-informed deep learning for stochastic particle dynamics estimation.
\newblock \emph{Proceedings of the National Academy of Sciences}, 122(9): e2418643122.

\bibitem[{Zhang, Li, and Zhou(2025)}]{zhang25learning}
Zhang, Z.; Li, T.; and Zhou, P. 2025.
\newblock Learning stochastic dynamics from snapshots through regularized unbalanced optimal transport.
\newblock In \emph{The Thirteenth International Conference on Learning Representations}.

\end{thebibliography}

\end{document}